\newcommand*{\pipe}{\ensuremath{\, | \, }}
\newcommand*{\fatpipe}{\ensuremath{\, \| \, }}
\newcommand*{\Tr}{\operatorname{Tr}}
\providecommand*{\poly}{\ensuremath{\operatorname{poly}}}
\newcommand*{\negl}{\ensuremath{\operatorname{negl}}}
\providecommand{\calA}{\ensuremath{\mathcal{A}}}
\providecommand{\calB}{\ensuremath{\mathcal{B}}}
\providecommand{\calD}{\ensuremath{\mathcal{D}}}
\providecommand{\calE}{\ensuremath{\mathcal{E}}}
\providecommand{\calH}{\ensuremath{\mathcal{H}}}
\providecommand{\calK}{\ensuremath{\mathcal{K}}}
\providecommand{\calL}{\ensuremath{\mathcal{L}}}
\providecommand{\calQ}{\ensuremath{\mathcal{Q}}}
\providecommand{\calR}{\ensuremath{\mathcal{R}}}
\providecommand{\calT}{\ensuremath{\mathcal{T}}}
\providecommand{\bbI}{\ensuremath{\mathbb{I}}}
\providecommand{\bbN}{\ensuremath{\mathbb{N}}}
\providecommand{\bbP}{\ensuremath{\mathbb{P}}}
\providecommand{\bbR}{\ensuremath{\mathbb{R}}}
\newtheorem{theorem}{Theorem}
\newtheorem{lemma}[theorem]{Lemma}
\newtheorem{definition}[theorem]{Definition}
\newtheoremstyle{redthm}{}{}{\color{red}}{}{\color{red}\bfseries}{}{ }{}
\theoremstyle{redthm}
\newcommand{\JJM}[1]{\textcolor{blue}{[jjm: #1]}}
\newcommand{\dccqs}{Dahlem Center for Complex Quantum Systems, Freie Universit{\"a}t Berlin, 14195 Berlin, Germany}
\newcommand{\hzb}{Helmholtz-Zentrum Berlin f{\"u}r Materialien und Energie, 14109 
Berlin, Germany}
\newcommand{\hhi}{Fraunhofer Heinrich Hertz Institute, 10587 Berlin, Germany}
\newcommand{\salerno}{Dipartimento di Ingegneria Industriale, Università degli Studi di Salerno, 84084 Fisciano (SA), Italy}
\newcommand{\nocontentsline}[3]{}
\begin{document}	
	\title{The computational two-way 
    quantum capacity}
	\date{\today}
    
        \author{J.~J.~Meyer}
        \affiliation{\dccqs}
        
        \author{J.~Rizzo}
        \affiliation{\dccqs}
        
        \author{A.~Raza}
        \affiliation{\dccqs}
        
        \author{L.~Leone}
        \affiliation{\salerno}
        \affiliation{\dccqs}
        
        \author{S.~Jerbi}
        \affiliation{\dccqs}
        \affiliation{\hzb}
                
        \author{J.~Eisert}
        \affiliation{\dccqs}
        \affiliation{\hzb}
        \affiliation{\hhi}

    \begin{abstract}
        Quantum channel capacities are fundamental to quantum information theory. Their definition, however, does not limit the computational resources of sender and receiver.
        In this work, we initiate the study of computational quantum capacities. These quantify how much information can be reliably transmitted when imposing the natural requirement that en- and decoding have to be computationally efficient.
        We focus on the computational two-way quantum capacity and showcase that it is closely related to the computational distillable entanglement of the Choi state of the channel.
        This connection allows us to show a stark computational capacity separation. Under standard cryptographic assumptions, there exists a quantum channel of polynomial complexity whose computational two-way quantum capacity vanishes while its unbounded counterpart is nearly maximal.
        More so, we show that there exists a sharp transition in computational quantum capacity from nearly maximal to zero when the channel complexity leaves the polynomial realm.
        Our results demonstrate that the natural requirement of computational efficiency can radically alter the limits of quantum communication.
    \end{abstract}

    \maketitle

Quantum information theory is concerned with identifying the potential and limitations of accomplishing tasks that involve quantum systems as carriers of information. Largely formulated in its basics in the 1990s and later refined or made rigorous, it is one of the conceptual and mathematical pillars on which quantum technologies rest~\cite{PhysRevA.54.3824,PhysRevA.55.1613,Devetak,MarkWilde,watrous2018theory}. Central to this framework are quantum channel capacities, which assess the rate at which qubits can be reliably transmitted through a noisy quantum channel and even constrains the overhead of quantum error correction~\cite{fawzi2022lower}. It was found that the quantum capacity is given by the regularized coherent information, capturing the inherently asymptotic character of the quantity
\cite{PhysRevA.55.1613,Devetak,MarkWilde,watrous2018theory}. This reflects the fact that, when sending quantum information from a sender to a receiver over multiple copies of the channel, increasingly sophisticated encoding and 
decoding procedures can be employed.

In this work, we challenge this picture, leading to unexpected and potentially surprising conclusions. We do so by adopting the premise that all steps in such a communication task must be carried out using \emph{polynomial} resources. This is a foundational insight in computer science and quantum computation: tasks are considered feasible or ``efficient'' only if the total number of steps \emph{grows polynomially} with the input size. For this reason, it is natural to require that only polynomially many copies are used and that only polynomially many quantum gates are implemented in encoding and decoding. 

\begin{figure}[tbh]
  \centering
  \includegraphics{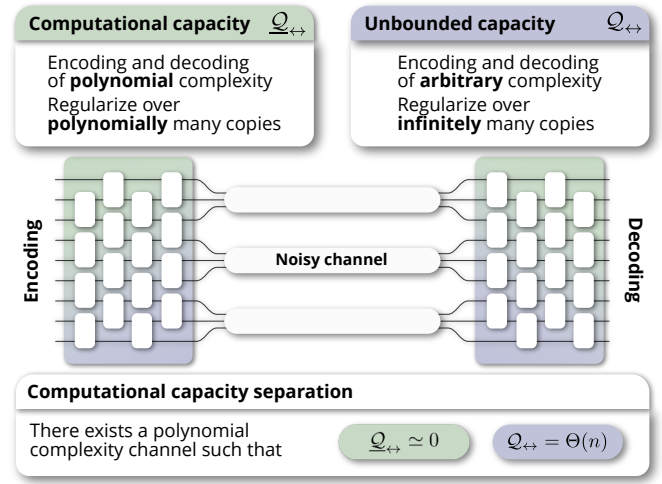}
  \caption{The computational two-way quantum capacity captures the limits of quantum information transmission with the help of bidirectional classical communication when the involved operations are of polynomial complexity. The difference to the traditional, unbounded setting can be quite dramatic, as we show there exists a channel with nearly maximal unbounded capacity but vanishing computational capacity.}
  \label{fig:complexity}
\end{figure}

Motivated by this reasoning, we initiate the study of the \emph{computational} two-way quantum capacity, that is, the two-way quantum capacity under exactly these polynomial constraints. 
With this, we establish a new direction within \emph{computational quantum information theory}~\cite{Pseudoentanglement,arnon2023computationalentanglementtheory,leone2025entanglementtheorylimitedcomputational,meyer2025computational,Yanguez2025MinMax}, ensuring that the computed capacities can be operationally realized.
We find that it is closely related to the computational distillable entanglement of the Choi state, mirroring the situation in the traditional, unconstrained setting -- and 
capturing the resource character of entanglement in a bipartite scenario. This insight 
provides a quantitative tool for assessing the computational two-way quantum capacity. Most notably, it leads to a striking separation result: Under standard cryptographic assumptions, there exist physically meaningful quantum channels of polynomial complexity whose computational two-way quantum capacity vanishes, while their traditional, unconstrained two-way quantum capacity grows with the system size. This phenomenon can be interpreted as a form of computationally bound capacity in quantum information.
To add even more, we show that there exists a sharp transition in the computational two-way quantum capacity of a generalized dephasing channel depending on its Choi-rank. In the polynomial Choi-rank regime, we provide a computationally efficient protocol that achieves nearly optimal quantum capacity, whereas the computational capacity can vanish as soon as the Choi-rank becomes super-polynomial.

The upshot of our work is that, to achieve a complete picture of quantum communication theory, 
its foundational premises must be reassessed in light of the potentially super-polynomial 
-- and hence unreasonable -- resources they implicitly might require. Now that more than thirty years have passed since the advent of quantum communication theory and quantum computation, the two fields should converge more fully, and it should become clear that notions of efficient manipulation of quantum information are central to both.
\smallskip


{\bf Computational information theory.}
Even though we enter a new conceptual realm of quantum communication, we largely follow the notation of Ref.~\cite{meyer2025computational}, where computational constraints in hypothesis testing have been studied. 
In computational (quantum) information theory, we quantify the complexity of implementing our operations and protocols relative to a scaling variable $n \in \bbN$. As such, we always deal with \emph{sequences} of objects like states, channels or Hilbert spaces. To keep the notation light, we use the convention that an index $n$ denotes a sequence of objects for $n \in \bbN$, $\rho_n$ for example denoting the sequence of states $(\rho_n)_{n \in \bbN}$.

To measure the complexity of a quantum channel $\Phi$, we fix the following computational model: operations on an input quantum state are implemented with access to an unbounded number of auxiliary systems in the $\ket{0}$ state vector. All operations are composed of elementary 2-local gates in the system of choice followed by measurements that are either projecting onto $\ket{0}$ or ignoring the system. The number of such gates is the complexity $C(\Phi)$ of the channel. Note that in this model, tracing out is a free operation. 

The above notion allows us to define gate-constrained versions of quantum information processing tasks. The most important one we use is \emph{entanglement distillation}, where our task is to use an LOCC (Local Operations and Classical Communication) map of bounded complexity to distill maximally entangled states from a shared resource state. Let $\operatorname{LOCC}(A|B;G)$ denote the set of LOCC maps relative to a bipartite system $A,B$ with complexity at most $G$ and $\phi$ be an ebit (a qubit maximally entangled state). 
The maximal amount of entanglement distillable with bounded error is then~\cite{khatri2024principles}
\begin{align}\label{eqn:def-single-shot-efficiently-distillable-entanglement}
    &E_D^{\epsilon}(\rho^{A,B}; G) = \log 2 \times \max \mathstrut\big\{ m \in \bbN \pipe \\
    &\quad \exists \Phi \in \operatorname{LOCC}(A|B; G) \colon F(\Phi[\rho^{A,B}], \phi^{\otimes m}) \geq 1-\epsilon \big\}.\nonumber
\end{align}
Here, $F(\rho, \sigma) = \lVert \sqrt{\rho}\sqrt{\sigma}\rVert_1^2$ is the quantum fidelity~\footnote{The factor $\log 2$ arises from our convention that $\log$ denotes the natural logarithm, requiring us to convert from bits to nats.}.

To get a quantity that captures the 
essence of how much entanglement can be distilled with polynomial resources, we perform a \emph{polynomial regularization} of the above quantity. The mathematical background for this is presented in Ref.~\cite{meyer2025computational}, we shall only mention the key points. In this setting, we care about the scaling of our quantities up to \emph{negligible} terms, i.e.\ terms that vanish faster than any polynomial, written as $\negl(n)$. We use the notation $f \simeq g$ to indicate that $f-g$ is a negligible function and the associated ordering relations $\gtrsim$ and $\lesssim$. Let us now give the definition of the \emph{computational distillable entanglement} of a state \smash{$\rho_n^{A_n, B_n}$}~\cite{meyer2025computational} (compare also similar notions in Refs.~\cite{Pseudoentanglement,arnon2023computationalentanglementtheory,leone2025entanglementtheorylimitedcomputational,Yanguez2025MinMax}) as
\begin{align}\label{eqn:def-computational-distllable-entanglement}
    \begin{split}
    &\underline{E}_D(\rho_n^{A_n, B_n}) :\simeq \\
    &\qquad\lim_{\epsilon \to 0}\lim_{\ell \to \infty} \liminf_{k\to\infty} \frac{1}{n^k} E_D^{\epsilon}((\rho_n^{A_n, B_n})^{\otimes n^k}; n^{k\ell}).
    \end{split}
\end{align}
The notation $:\simeq$ indicates that the right hand side is to be understood as an object in the space of functions that is obtained by modding out the equivalence relation $f \simeq g$ from the usual space of functions from $\bbN \to \bbR$. Ref.~\cite{meyer2025computational} establishes that this constitutes a metric space which in turn allows us to make sense of the involved limits, \emph{seen as limits of functions}, not as pointwise limits. The limit $k \to \infty$ takes the number of copies to an arbitrary large polynomial, which in turn allows us to neglect certain correction terms. The limit $\ell \to \infty$ takes the number of gates to an arbitrarily large polynomial, ensuring that we can implement all polynomial complexity operations. The limit $\epsilon \to 0$ is a standard part of the definition of the distillable entanglement.

Finally, we consider two quantum states $\rho_n$ and $\sigma_n$ to be \emph{computationally indistinguishable}, $\rho_n \approx_c \sigma_n$, if for all POVM effects of polynomial complexity $C(\Lambda_n) \leq O(\poly(n))$ and all $m_n \leq O(\poly(n))$, we have that 
\begin{align}
    |\Tr[\Lambda_n \rho_n^{\otimes m_n}]-\Tr[\Lambda_n \sigma_n^{\otimes m_n}]|\leq \negl(n).
\end{align}

{\bf Computational two-way quantum capacity.}
We now turn to defining the computational two-way quantum capacity. To prepare this, let us first review the standard definition. The worst-case channel fidelity of two quantum channels $\Phi$ and $\Psi$ is defined as
\begin{align}
    F(\Phi, \Psi) \coloneqq \inf_{\psi} F((\bbI \otimes \Phi)[\psi], (\bbI \otimes \Psi)[\psi]),
\end{align}
where $F$ is the fidelity of quantum states and the auxiliary system has the same dimension as the input system. To capture LOCC operations, we will make use of the concept of \emph{quantum combs}. For us, it is sufficient that the concept captures a map whose input and outputs are quantum channels and that can model adaptive computations. We use the symbol \smash{$\Phi^{[m]}$} to denote the comb that represents $m$ independent invocations of a basic channel $\Phi$.
Let now denote with $\bbI_d$ a $d$-dimensional identity channel. 
The single-shot two-way quantum capacity of a quantum channel is then defined as~\cite{khatri2024principles}
\begin{align}
\begin{split}
    \calQ_{\leftrightarrow}^\epsilon(\Phi) \coloneqq &\log \max \mathstrut \big\{ d \in \bbN \pipe \\
    \nonumber
    &\qquad \exists \calL : F(\calL [\Phi], \bbI_d)\geq 1- \epsilon \big\},
\end{split}
\end{align}
where $\calL$ is a LOCC comb (with respect to the input and output side) that represents en- and decoding. The two-way quantum capacity is then defined by regularizing as
\begin{align}
    &\calQ_{\leftrightarrow}(\Phi) \coloneqq \lim_{\epsilon \to 0} \lim_{m\to\infty} \frac1m \calQ_{\leftrightarrow}^{\epsilon}(\Phi^{[m]}).
\end{align}
Following the developments of the preceding section, we define a computational version of the two-way quantum capacity by imposing a gate restriction on the single-shot setting and then performing a polynomial regularization.
\begin{definition}[Computational two-way quantum capacity]
    We define the \emph{$G$-complexity single-shot two-way quantum capacity} of a quantum channel $\Phi$ as
    \begin{align}
    \begin{split}
    &\calQ_{\leftrightarrow}^\epsilon(\Phi; G) \coloneqq \log \max  \{ d \in \bbN \pipe \\
    &\quad \exists \calL : C(\calL)\leq G, F(\calL [\Phi], \bbI_d)\geq 1- \epsilon \},
    \end{split}
    \end{align}
    where $\calL$ is a LOCC comb.
    The \emph{computational two-way quantum capacity} of a channel $\Phi_n$ is then given by
    \begin{align}
    &\underline{\calQ}_{\leftrightarrow}(\Phi) :\simeq \lim_{\epsilon \to 0} \lim_{\ell \to \infty} \liminf_{k \to \infty} \frac{1}{n^k}\calQ_{\leftrightarrow}^{\epsilon}(\Phi_n^{[n^k]}; n^{k\ell}).
    \end{align}
\end{definition}

{\bf The capacity of efficiently stretchable channels.}
The objective of this section is to lay the foundations for our separation result by establishing a capacity formula for a quite general class of channels.
We closely follow Ref.~\cite{pirandola2017fundamental} and pay close attention to the computational efficiency of the involved operations. We start with a definition.
\begin{definition}[Efficient stretchability]
    We call a quantum channel \smash{$\Phi_n^{A_n \to B_n}$} \emph{efficiently stretchable} if there exists an LOCC map \smash{$\calL_n^{A_n, A_n'|B_n}$} of polynomial complexity \smash{$C(\calL_n^{A_n, A_n'|B_n})\leq O(\poly(n))$}, where the LOCC is relative to the bipartititon $A_n ,A_n'|B_n$, and a 
    state \smash{$\sigma_n^{A_n', B_n}$} such that
    \begin{align}
        \Phi_n^{A_n \to B_n}[X^{A_n}] = \calL_n^{A_n,  A_n'|B_n}[X^{A_n}\otimes \sigma_n^{A_n' , B_n}].
    \end{align}
    The quantum state \smash{$\sigma_n^{A_n',  B_n}$} does not need to be of polynomial complexity. We further call a quantum channel \emph{efficiently Choi-stretchable} if it is efficiently stretchable for \smash{$\sigma_n^{A_n' B_n} = J_{\Phi_n}$} the Choi state of the channel.
\end{definition}
In the following, we will usually suppress the exact identification of subsystems as it is clear from context.
It has been shown in Ref.~\cite{pirandola2017fundamental} that Choi-stretchability is implied by so-called teleportation-covariance. As a means to verify efficient Choi-stretchability, we will prove an analogous result and crucially show that this  implication preserves computational efficiency. The referenced teleportation-correction unitaries arise in the generalized teleportation protocol~\cite[Section~3.3.2]{khatri2024principles}, for our purposes it suffices to know that these operations are one-local.
\begin{lemma}[Efficient Choi-strechability]\label{lemma:efficient-stretchability-from-teleportation-covariance}
    Consider a quantum channel $\Phi_n\colon \calH_n \to \calK_n$ of polynomial complexity $C(\Phi_n) \leq O(\poly(n))$ such that $\log \dim\calH_n \leq O(\poly(n))$. If the channel is \emph{teleportation-covariant}, i.e.\ for any teleportation correction unitary $U_i$ there exists a unitary $V_i$ such that
    \begin{align}
        \Phi_n[U_i X U_i^{\dagger}] = V_i \Phi_n[X]V_i^{\dagger},
    \end{align}
    the channel is efficiently Choi-stretchable.
\end{lemma}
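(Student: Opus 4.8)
The plan is to reproduce the teleportation-simulation argument of Ref.~\cite{pirandola2017fundamental} for Choi-stretchability and to carefully audit the gate count of every step, so that the hypotheses $C(\Phi_n)\le O(\poly(n))$ and $\log\dim\calH_n\le O(\poly(n))$ translate into a \emph{polynomial-complexity} simulating LOCC map. Concretely, the candidate $\calL_n$ is the usual one: Alice performs a generalized Bell measurement on the input register $A_n$ together with her half $A_n'$ of the Choi state $J_{\Phi_n}=(\bbI\otimes\Phi_n)[\phi^+]$ (with $\phi^+$ maximally entangled on $A_n'$ and the through-channel register $A_n''$), obtains an outcome $i$ labelling a one-local teleportation-correction unitary $U_i$, and sends $i$ to Bob; Bob applies the correction dictated by teleportation-covariance to his half $B_n$. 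Writing out the post-measurement state and using $\Phi_n[U_iXU_i^\dagger]=V_i\Phi_n[X]V_i^\dagger$, one checks exactly as in the unconstrained case that $\calL_n[X\otimes J_{\Phi_n}]=\Phi_n[X]$, so it only remains to certify that $\calL_n$ can be chosen of polynomial complexity.

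For the teleportation part this is immediate. The hypothesis $\log\dim\calH_n\le O(\poly(n))$ means $A_n$ and $A_n'$ are $O(\poly(n))$ qubits, so the generalized Bell measurement decomposes into $O(\poly(n))$ two-local gates followed by computational-basis measurements, and the outcome $i$ is $O(\poly(n))$ classical bits; moreover $C(\Phi_n)\le O(\poly(n))$ forces $\log\dim\calK_n\le O(\poly(n))$, so Bob's register is of polynomial size as well. Hence every piece of $\calL_n$ other than the correction $V_i$ already has complexity $O(\poly(n))$.

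The crux is to show that the correction itself can be implemented with $O(\poly(n))$ gates, and this is where $C(\Phi_n)\le O(\poly(n))$ enters essentially: for a general teleportation-covariant channel $V_i$ need not have small circuit complexity, but an efficient circuit for $\Phi_n$ constrains it. I would fix an efficient Stinespring dilation $\Phi_n=\Tr_{F_n}\circ\calV_n$ read off from the circuit of $\Phi_n$, so that both $\calV_n$ and $\calV_n^\dagger$ are realized by $O(\poly(n))$ two-local gates — the latter being the reversed circuit followed by a projection of the ancillas onto $\ket{0}$, which costs nothing in this model, as tracing out and $\ket{0}$-projections are free. Since $\calV_n U_i$ and $(V_i\otimes\bbI_{F_n})\calV_n$ are both Stinespring isometries of the channel $X\mapsto\Phi_n[U_iXU_i^\dagger]$, uniqueness of the dilation yields a unitary $R_i$ on $F_n$ with $(V_i\otimes R_i)\calV_n=\calV_n U_i$; thus $V_i\otimes R_i$ coincides, on the code subspace $\mathrm{im}(\calV_n)$, with the manifestly efficient unitary $\calV_n U_i\calV_n^\dagger$. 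From here one wants a $O(\poly(n))$-gate realization of the action of $V_i$ alone on $B_n$. The hard part will be exactly this last extraction — peeling $V_i$ off the environment correction $R_i$ \emph{deterministically}, rather than only probabilistically via a post-selection onto the code subspace — and I expect it to require either carrying the whole stretching construction out at the level of the isometric channel and tracing out $F_n$ only at the very end, or invoking the Stone--von Neumann structure of the projective representation $i\mapsto V_i$, which forces each $V_i$ to act as a one-local Pauli on a tensor factor of $B_n$ that the circuit for $\Phi_n$ exhibits explicitly. Granting the efficient correction, composing it with the teleportation part gives $\calL_n\in\operatorname{LOCC}(A_n,A_n'|B_n;O(\poly(n)))$ simulating $\Phi_n$ from its Choi state, i.e.\ efficient Choi-stretchability.
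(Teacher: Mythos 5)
Your proposal follows the same route as the paper's proof --- cite Ref.~\cite{pirandola2017fundamental} for Choi-stretchability itself, audit the teleportation LOCC (which is indeed trivially efficient given $\log\dim\calH_n \leq O(\poly(n))$), and then reduce everything to showing that the correction $V_i$ has polynomial complexity via an efficient dilation of $\Phi_n$ and Stinespring uniqueness. But you stop exactly where the paper's argument does its work: you write ``granting the efficient correction'' and flag the deterministic extraction of $V_i$ from $V_i \otimes R_i$ as an open ``hard part.'' That is a genuine gap in the proposal, since the efficiency of $V_i$ is the entire content of the lemma beyond what Ref.~\cite{pirandola2017fundamental} already provides.

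The paper closes this step as follows. Writing $\Phi_n[X] = \Tr_{\mathrm{anc}}[W_i(|0\rangle\!\langle 0|_{\mathrm{anc}}\otimes X)W_i^\dagger]$ with $W_i$ of complexity $C(\Phi_n)$, it compares the two dilated expressions for $\Phi_n[U_iXU_i^\dagger]$ and $V_i\Phi_n[X]V_i^\dagger$ and concludes that there exists a unitary $Q_i$ on the ancilla with the \emph{global tensor-product identity} $W_i(\bbI\otimes U_i)W_i^\dagger = Q_i\otimes V_i$. Once this factorization is granted, no post-selection onto the code subspace is needed: applying the manifestly efficient circuit $W_i^\dagger$, then $\bbI\otimes U_i$, then $W_i$, and simply discarding the ancilla implements $V_i$ deterministically, because the ancilla factor $Q_i$ is traced away harmlessly. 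This gives $C(V_i)\leq C(U_i)+2C(W_i) = C(U_i)+2C(\Phi_n) \leq O(\poly(n))$, using that the $U_i$ are one-local. In other words, the ``peeling off'' you were searching for is achieved not by working at the isometric level or by invoking Stone--von Neumann structure, but by upgrading the on-the-code-subspace agreement to a full tensor factorization of the conjugating unitary. Your instinct that this upgrade is the only nontrivial point is sound --- Stinespring uniqueness a priori only constrains the isometries on $\mathrm{im}(W_i(|0\rangle_{\mathrm{anc}}\otimes\cdot))$, and the paper asserts the global factorization rather than deriving it in detail --- but without some version of that step your argument does not yield the lemma.
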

The proof is deferred to the end matter.
Let us now analyze the capacity of efficiently stretchable channels. We start with an achievability statement that holds for all quantum channels and gives a direct operational meaning to the computational distillable entanglement of the Choi state in the context of quantum communication. 
\begin{lemma}[Lower bound to computational two-way quantum capacity]
Consider a quantum channel $\Phi_n$ with Choi state $J_{\Phi_n}$.
Then, we have the lower bound on the computational two-way quantum capacity
\begin{align}
    \underline{\calQ}_{\leftrightarrow}(\Phi_n) \gtrsim \underline{E}_D(J_{\Phi_n}).
\end{align}
\end{lemma}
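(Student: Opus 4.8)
The plan is to emulate the standard teleportation-based achievability argument for the unconstrained two-way quantum capacity, as in Refs.~\cite{pirandola2017fundamental,khatri2024principles}, while tracking the gate complexity of every step and confirming it stays polynomial. Concretely, for fixed $\epsilon,\ell,k$ I would build an LOCC comb $\calL$ with $n^k$ slots that proceeds in three stages. First, an \emph{encoding} stage: for each channel slot Alice prepares a maximally entangled state on $\calH_n\otimes\calH_n$, routes one half through the slot, and keeps the other half, so that the result is $n^k$ copies of the Choi state $J_{\Phi_n}$, shared with Alice holding the references and Bob the channel outputs. Second, a \emph{distillation} stage: apply to $(J_{\Phi_n})^{\otimes n^k}$ an LOCC map of complexity at most $n^{k\ell}$ extracting the optimal amount, obtaining a state $\tau$ with $F(\tau,\phi^{\otimes m})\ge 1-\epsilon$ and $\log(2)\,m = E_D^{\epsilon}((J_{\Phi_n})^{\otimes n^k};n^{k\ell})$. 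Third, a \emph{teleportation} stage: treating Alice's payload register (the comb's open input, of dimension $2^m$) as the state to be transmitted, teleport it to Bob using the $m$ approximate ebits in $\tau$, i.e.\ $m$ Bell measurements, $2m$ bits of classical communication, and the one-local Pauli corrections. The composition is an LOCC comb relative to $A_n|B_n$, and $\calL[\Phi_n^{[n^k]}]$ is exactly the teleportation channel $T_\tau$ on $\bbC^{2^m}$, where $T_\omega$ denotes teleportation through a resource state $\omega$.

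Next I would do the complexity bookkeeping. The first stage costs $O(n^k\log\dim\calH_n)$ two-local gates, which is polynomial whenever the channel acts on polynomially many qubits (the regime of interest throughout); the second costs at most $n^{k\ell}$ by construction; the third costs $O(m)$ gates, with $m$ bounded by the qubit count of $(J_{\Phi_n})^{\otimes n^k}$ and hence polynomial. Summing, $C(\calL)\le n^{k\ell}+O(n^k\log\dim\calH_n)\le n^{k\ell'}$ for some $\ell'=\ell+O(1)$; since both regularizations take $\ell\to\infty$, this constant shift is immaterial.

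For the fidelity, teleportation through the ideal resource $\phi^{\otimes m}$ implements $\bbI_{2^m}$ exactly, so it suffices to control how $T_\omega$ depends on $\omega$. Writing the protocol as a fixed quantum channel $\Xi$ applied to $\rho\otimes\omega$, trace-norm contractivity of $\Xi$ gives $\lVert T_\tau-T_{\phi^{\otimes m}}\rVert_\diamond\le\lVert\tau-\phi^{\otimes m}\rVert_1$; combining $F(\tau,\phi^{\otimes m})\ge 1-\epsilon$ with Fuchs--van~de~Graaf yields $\lVert\tau-\phi^{\otimes m}\rVert_1\le 2\sqrt{\epsilon}$, and the standard bound relating the diamond norm to the worst-case channel fidelity then gives $F(\calL[\Phi_n^{[n^k]}],\bbI_{2^m})=F(T_\tau,\bbI_{2^m})\ge 1-2\sqrt{\epsilon}=:1-\epsilon'$. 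Hence $\calQ_{\leftrightarrow}^{\epsilon'}(\Phi_n^{[n^k]};n^{k\ell'})\ge\log(2^m)=E_D^{\epsilon}((J_{\Phi_n})^{\otimes n^k};n^{k\ell})$.

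Finally I would take the limits: dividing by $n^k$ and applying $\liminf_{k\to\infty}$ preserves the inequality; $\lim_{\ell\to\infty}$ goes through because $\calQ_{\leftrightarrow}^{\epsilon'}(\,\cdot\,;G)$ is non-decreasing in $G$ and $\ell'\to\infty$ with $\ell$; and $\epsilon\to 0$ forces $\epsilon'\to 0$, so monotonicity of $\calQ_{\leftrightarrow}^{\epsilon'}$ in $\epsilon'$ recovers $\underline{\calQ}_{\leftrightarrow}(\Phi_n)\gtrsim\underline{E}_D(J_{\Phi_n})$, all read in the function-space sense of Ref.~\cite{meyer2025computational}. The routine pieces are the teleportation continuity estimate and the gate count; the step I expect to be the main obstacle is the bookkeeping of these nested limits --- verifying that the constant overhead $\ell\mapsto\ell'$, the matching of exactly $n^k$ channel uses to exactly $n^k$ Choi copies, and the $\epsilon\mapsto 2\sqrt{\epsilon}$ degradation are all compatible with the $\liminf_k$, $\lim_\ell$, and $\epsilon\to 0$ in both definitions, so that the inequality descends cleanly to the $\simeq$-equivalence classes.
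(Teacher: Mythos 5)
Your proposal is correct and follows exactly the paper's route: prepare maximally entangled states and send halves through the channel to obtain copies of the Choi state, run the optimal efficient distillation protocol, and teleport through the distilled ebits. The paper states this in four sentences and leaves the complexity accounting, the $\epsilon\mapsto 2\sqrt{\epsilon}$ fidelity degradation, and the compatibility with the nested limits implicit; your write-up simply makes those routine checks explicit.
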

\begin{proof}
    The preparation of a maximally entangled state is an efficient operation. Sending one half through the given channel gives both parties a copy of the Choi state on which they perform the optimal efficient distillation protocol. Given the distilled ebits, they perform regular teleportation to transmit the desired information. Teleportation is 
    efficient, rendering the protocol efficient and proving the given inequality.
\end{proof}
Having established a lower bound on the computational two-way capacity that holds for all channels, we present an upper bound that holds for any efficiently stretchable channel.
\begin{lemma}[Upper bound for the computational two-way capacity]
Consider a quantum channel $\Phi_n$ that is efficiently stretchable as $\Phi_n[X] = \calL_n[X \otimes \sigma_n]$. Then, we have the upper bound on the computational two-way quantum capacity
\begin{align}
    \underline{\calQ}_{\leftrightarrow}(\Phi_n) \lesssim \underline{E}_D(\sigma_n).
\end{align}
\end{lemma}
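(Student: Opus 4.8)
The plan is to reduce any bounded-complexity, LOCC-assisted quantum communication protocol over $\Phi_n$ to a bounded-complexity entanglement distillation protocol acting on copies of $\sigma_n$, incurring only a polynomial overhead, and then to chase the polynomial regularization through. Fix $m=n^k$ and a gate budget $G=n^{k\ell}$, and let $\calL$ be an LOCC comb witnessing $\calQ_{\leftrightarrow}^{\epsilon}(\Phi_n^{[m]};G)$, so that $C(\calL)\le G$ and $F(\calL[\Phi_n^{[m]}],\bbI_d)\ge 1-\epsilon$ for the optimal dimension $d$. Specializing the worst-case channel fidelity to the maximally entangled input of dimension $d$ shows that, if the sender locally prepares a $d$-dimensional maximally entangled state and feeds one half into the effective channel $\calL[\Phi_n^{[m]}]$, the resulting bipartite state is $\epsilon$-close in fidelity to $\phi^{\otimes\log_2 d}$ shared between sender and receiver. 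The local preparation costs $O(\log d)$ gates, and $\log d\le\poly(n)\cdot n^k$ for channels of polynomial output dimension. Thus the optimal capacity protocol becomes, after this conversion, an LOCC comb that distills $\log_2 d$ ebits from $m$ uses of $\Phi_n$ with complexity at most $G+O(\log d)$.

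Next I would invoke efficient stretchability to eliminate the channel uses. Replacing each of the $m$ invocations of $\Phi_n$ inside the comb by $\Phi_n[X]=\calL_n[X\otimes\sigma_n]$ with $C(\calL_n)\le O(\poly(n))$, and using that the resource states $\sigma_n$ are fixed and independent of all intermediate measurement outcomes, one prepares the $m$ copies of $\sigma_n$ up front and folds the maps $\calL_n$ into the surrounding LOCC; since LOCC composed with LOCC is again LOCC relative to the sender/receiver cut — the $A_n'$ halves of the $\sigma_n$'s being assigned to the sender and the $B_n$ halves to the receiver — this collapses the comb into a single LOCC map $\calL_n'$ acting on $(\sigma_n)^{\otimes m}$ whose output is $\epsilon$-close to $\phi^{\otimes\log_2 d}$. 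This is precisely the teleportation-stretching argument of Ref.~\cite{pirandola2017fundamental}, now tracked at the level of gate counts: $C(\calL_n')\le G+m\,C(\calL_n)+O(\log d)$. By the definition of the single-shot efficiently distillable entanglement, and bounding the gate budget on the right by $n^{k\ell}+n^k\poly(n)$, we obtain for every $n$
\begin{align}
    \calQ_{\leftrightarrow}^{\epsilon}\big(\Phi_n^{[n^k]};\,n^{k\ell}\big)\;\le\;E_D^{\epsilon}\big((\sigma_n)^{\otimes n^k};\;n^{k\ell}+n^k\poly(n)\big).
\end{align}

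It then remains to absorb the overhead. For $\ell'=\ell+1$, all sufficiently large $k$, and all sufficiently large $n$ one has $n^{k\ell}+n^k\poly(n)\le n^{k\ell'}$, so monotonicity of $E_D^{\epsilon}(\,\cdot\,;G)$ in $G$ upgrades the above to $\calQ_{\leftrightarrow}^{\epsilon}(\Phi_n^{[n^k]};n^{k\ell})\le E_D^{\epsilon}((\sigma_n)^{\otimes n^k};n^{k\ell'})$ as functions of $n$. Dividing by $n^k$, taking $\liminf_{k\to\infty}$, then $\lim_{\ell\to\infty}$ (which also sends $\ell'\to\infty$), and finally $\lim_{\epsilon\to0}$, the left-hand side tends to $\underline{\calQ}_{\leftrightarrow}(\Phi_n)$ and the right-hand side to $\underline{E}_D(\sigma_n)$ in the metric space of functions of Ref.~\cite{meyer2025computational}, which yields $\underline{\calQ}_{\leftrightarrow}(\Phi_n)\lesssim\underline{E}_D(\sigma_n)$.

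The step I expect to be the main obstacle is the flattening of the adaptive comb: one must verify that front-loading the $m$ copies of $\sigma_n$ and folding in the simulation maps $\calL_n$ reproduces the protocol's output state \emph{exactly} — not merely its marginals — while staying within LOCC across the correct sender/receiver partition, and that the attendant rearrangement of elementary gates incurs only the additive overhead $m\,C(\calL_n)=n^k\poly(n)$ rather than a multiplicative or circuit-depth-dependent blow-up that could survive division by $n^k$; one also has to confirm that $\log d$ is at most polynomial in $n$ for fixed $k,\ell$, so that the conversion cost is genuinely absorbed after regularization. By comparison, the fidelity translation between faithful qubit transmission and ebit distillation and the final limit bookkeeping are routine, although care is needed that the order of the $\epsilon$, $\ell$, and $k$ limits is respected throughout.
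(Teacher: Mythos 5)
Your proposal is correct and follows essentially the same route as the paper's proof: collapse the adaptive LOCC comb via teleportation stretching \`a la Ref.~\cite{pirandola2017fundamental} into a single LOCC map on $\sigma_n^{\otimes m}$, track the additive complexity overhead $C(\calL_n') \leq C(\calL_0) + m\,C(\calL_n)$, and invoke the equivalence of qubit transmission and ebit distillation before regularizing. The paper states this more tersely, deferring the exactness of the comb flattening entirely to the cited reference, whereas you additionally spell out the limit bookkeeping (absorbing the polynomial overhead into $\ell' = \ell+1$), which is a welcome but not substantively different elaboration.
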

\begin{proof}
    This follows directly from Ref.~\cite{pirandola2017fundamental}, where it was shown that the two-way LOCC comb $\calL_0$ that realizes the transmission over $m$ copies of the channel can be combined with the $m$ LOCC operations that realize the channel $\calL_n$ to get a composite LOCC map $\calL_n'$ that acts on $m$ copies of the resource state $\calL_n'[ \sigma_n^{\otimes m} ]$.
    Importantly, the complexity of $\calL_n'$ fulfills
    \begin{align}
        C(\calL_n') \leq C(\calL_0) + m C(\calL_n),
    \end{align}
    which is polynomial if all involved operations have polynomial complexity and we work with polynomially many copies.
    As distilling ebits and transmitting qubits is equivalent, this means that the two-way quantum capacity is at most the distillable entanglement that would be achieved with full control over $\calL'$.
\end{proof} 
These conditions coincide for efficiently Choi-stretchable channels, which allows us to completely characterize their computational two-way quantum capacity.
\begin{theorem}[Characterization of the two-way capacity]\label{theorem:two-way-capacity}
Consider a quantum channel $\Phi_n$ that is efficiently Choi-stretchable as $\Phi_n[X] = \calL_n[X \otimes J_{\Phi_n}]$.
Then, the computational two-way quantum capacity is given by
\begin{align}
    \underline{\calQ}_{\leftrightarrow}(\Phi_n) \simeq \underline{E}_D(J_{\Phi_n}).
\end{align}
\end{theorem}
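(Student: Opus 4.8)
The plan is to obtain the statement as a sandwich of the two immediately preceding lemmas, exploiting the fact that for an efficiently Choi-stretchable channel the resource state appearing in the upper bound is literally the Choi state. Concretely, I would first invoke the Lower bound lemma, which holds for \emph{every} quantum channel and with no stretchability hypothesis, to get $\underline{\calQ}_{\leftrightarrow}(\Phi_n) \gtrsim \underline{E}_D(J_{\Phi_n})$. Then I would apply the Upper bound lemma to the specific efficient stretching $\Phi_n[X] = \calL_n[X \otimes J_{\Phi_n}]$ furnished by the efficient Choi-stretchability assumption, with $\sigma_n = J_{\Phi_n}$; this yields directly $\underline{\calQ}_{\leftrightarrow}(\Phi_n) \lesssim \underline{E}_D(J_{\Phi_n})$.

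To close, I would note that both $\underline{\calQ}_{\leftrightarrow}$ and $\underline{E}_D$ are defined through the same polynomial regularization — the nested limits $\epsilon \to 0$, $\ell \to \infty$, $\liminf_{k\to\infty}$ — so the two inequalities live as relations between elements of the same quotient space of functions modulo $\simeq$, and no reconciliation of differing regularization orders is required. In that metric space, $f \gtrsim g$ and $f \lesssim g$ holding simultaneously means $f - g$ is bounded below and above by negligible functions, hence $f - g$ is itself negligible, i.e.\ $f \simeq g$. Applying this with $f = \underline{\calQ}_{\leftrightarrow}(\Phi_n)$ and $g = \underline{E}_D(J_{\Phi_n})$ gives the claimed identity.

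The only point requiring genuine care — and what I would flag as the main (if modest) obstacle — is bookkeeping about where the hypothesis is actually used: the lower bound is universal, so the entire role of the efficient Choi-stretchability assumption is to make the Upper bound lemma's resource state coincide with $J_{\Phi_n}$ rather than with some other, possibly strictly more entangled, state $\sigma_n$; without the ``Choi'' qualifier one would only get $\underline{\calQ}_{\leftrightarrow}(\Phi_n) \lesssim \underline{E}_D(\sigma_n)$, which need not match the lower bound. I would also briefly remark that the complexity accounting inside the Upper bound lemma ($C(\calL_n') \leq C(\calL_0) + m\,C(\calL_n)$) is exactly what keeps the composite decoder inside the polynomial regime across the $\liminf_{k\to\infty}$ over $m = n^k$ copies, so that the upper bound survives the regularization and the sandwich is tight as an element of the quotient space.
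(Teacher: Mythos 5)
Your proposal is correct and matches the paper's argument exactly: the theorem is obtained by combining the universal lower bound lemma with the upper bound lemma specialized to $\sigma_n = J_{\Phi_n}$, which is precisely what the paper's one-line proof ("Immediate from the above two lemmas") does. Your additional remarks on the quotient-space semantics of $\simeq$ and on where the Choi-stretchability hypothesis enters are accurate elaborations of the same route.
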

\begin{proof}
    Immediate from the above two lemmas.
\end{proof}

{\bf Separation of computational and unbounded capacity.}
We now turn to describing the striking separation mentioned earlier. For a given probability distribution $p(x)$ over bitstrings $x \in \{0,1\}^n$, let us use the shorthand $Z^x = Z^{x_1} \otimes Z^{x_2} \otimes \dots \otimes Z^{x_n}$ to define a generalized dephasing channel as 
\begin{align}
    \calD_p[\rho] &\coloneqq \sum_{x \in \{0,1\}^n } p(x) Z^x \rho Z^x.
\end{align}
We note that the above is a Pauli channel and, as such, \emph{efficiently Choi stretchable}~\cite{pirandola2017fundamental}. The efficiency directly follows from the fact that a standard teleportation protocol with the maximally entangled state replaced by the Choi state of the Pauli channel is enough to simulate it~\cite{bowen2001teleportation}. This connection allows us to deduce the following formula for the two-way quantum capacity, see also Refs.~\cite{wilde2017converse,khatri2024principles}.
\begin{lemma}[Two-way quantum capacity of the dephasing quantum channel]\label{lemma:capacity-of-generalized-dephasing}
    Consider the generalized dephasing channel $D_p$. Then,
    \begin{align}
        \calQ_{\leftrightarrow}(\calD_p) = E_D(J_{\calD_p}) = D(p \fatpipe u).
    \end{align}
\end{lemma}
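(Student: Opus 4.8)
\emph{Proof plan.} The plan is to squeeze $\calQ_{\leftrightarrow}(\calD_p)$ between $D(p \fatpipe u)$ from below and above, trapping $E_D(J_{\calD_p})$ in the middle so that all three quantities collapse to the same number. The starting point is an explicit description of the Choi state. Write $\ket{\Omega_n} = 2^{-n/2}\sum_{z \in \{0,1\}^n} \ket{z}\ket{z}$ for the maximally entangled state of two $n$-qubit registers and set $\ket{\beta_x} := (\bbI \otimes Z^x)\ket{\Omega_n}$. A short computation (a Fourier sum over $z$) shows that $\{\ket{\beta_x}\}_{x \in \{0,1\}^n}$ is orthonormal and that $J_{\calD_p} = \sum_x p(x)\, \ket{\beta_x}\bra{\beta_x}$. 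Hence the spectrum of $J_{\calD_p}$ is exactly $\{p(x)\}_x$, giving $S(J_{\calD_p}) = H(p)$; moreover each marginal of $J_{\calD_p}$ is the maximally mixed state on $n$ qubits, so the coherent information equals $I_c(A \rangle B)_{J_{\calD_p}} = n\log 2 - H(p) = D(p \fatpipe u)$.

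For the lower bound I would invoke the Devetak--Winter hashing inequality \cite{khatri2024principles}, which yields $E_D(J_{\calD_p}) \ge I_c(A\rangle B)_{J_{\calD_p}} = D(p \fatpipe u)$ already with one-way assistance, hence a fortiori for the two-way distillable entanglement of \eqref{eqn:def-single-shot-efficiently-distillable-entanglement}. Combined with the non-computational version of the lower-bound lemma above --- Alice prepares $\ket{\Omega_n}$, sends one half through $\calD_p$, the two parties distil ebits and then teleport --- this gives $\calQ_{\leftrightarrow}(\calD_p) \ge E_D(J_{\calD_p}) \ge D(p \fatpipe u)$.

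For the upper bound I would use that $\calD_p$ is a Pauli channel and therefore teleportation-covariant, so the stretching argument of Ref.~\cite{pirandola2017fundamental} gives $\calQ_{\leftrightarrow}(\calD_p) \le E_R(J_{\calD_p})$, the relative entropy of entanglement of the Choi state. It then suffices to evaluate $E_R$ on one well-chosen separable state: take the classically correlated $\sigma_n := 2^{-n}\sum_z \ket{z}\bra{z} \otimes \ket{z}\bra{z}$, which is manifestly separable. Since every $\ket{\beta_x}$ lies in $\operatorname{span}\{\ket{z}\ket{z}\}_z = \operatorname{supp}(\sigma_n)$, the state $J_{\calD_p}$ is supported on that subspace, on which $\sigma_n$ acts as $2^{-n}\bbI$; therefore $D(J_{\calD_p} \fatpipe \sigma_n) = -S(J_{\calD_p}) - \Tr[J_{\calD_p}\log \sigma_n] = -H(p) + n\log 2 = D(p \fatpipe u)$, so $E_R(J_{\calD_p}) \le D(p \fatpipe u)$.

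Chaining the bounds, $D(p \fatpipe u) \le E_D(J_{\calD_p}) \le \calQ_{\leftrightarrow}(\calD_p) \le E_R(J_{\calD_p}) \le D(p \fatpipe u)$, so all inequalities are tight and both equalities of the lemma follow. The one genuinely non-routine step is the upper bound, where two ingredients have to click together: the teleportation-covariance reduction to $E_R$ of the Choi state, and the observation that the obvious classical separable state already attains the value $D(p \fatpipe u)$ --- it is this coincidence $E_R(J_{\calD_p}) = I_c(A\rangle B)_{J_{\calD_p}}$ that makes the squeeze close. The remaining pieces (the Fourier computation of $J_{\calD_p}$, the hashing bound, the distil-and-teleport protocol) are standard.
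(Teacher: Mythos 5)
Your proof is correct, and it follows the same overall architecture as the paper's: squeeze the capacity between the coherent information $n\log 2 - H(p) = D(p\fatpipe u)$ from below and the relative entropy to the classically correlated state $J_{\calD_u}$ from above, with the Choi-stretchability reduction linking $\calQ_{\leftrightarrow}$ and the entanglement of $J_{\calD_p}$. The differences are in which standard tools you plug into that skeleton. For the lower bound you use the Devetak--Winter hashing inequality on the state $J_{\calD_p}$ and then teleport, whereas the paper bounds $\calQ_{\leftrightarrow}\geq\calQ\geq I_c$ at the channel level; these are interchangeable. For the upper bound the paper invokes the Rains bound $E_D(\rho)\leq\min_{\sigma\,\mathrm{PPT}}D(\rho\fatpipe\sigma)$ and controls $D(J_{\calD_p}\fatpipe J_{\calD_u})\leq D(p\fatpipe u)$ indirectly, by simulating $\calD_p$ as a measure-and-correct map acting on $\rho\otimes p$ and applying data processing; you instead use the relative-entropy-of-entanglement converse on the same candidate state and evaluate $D(J_{\calD_p}\fatpipe J_{\calD_u})$ exactly from the explicit eigendecomposition $J_{\calD_p}=\sum_x p(x)\ket{\beta_x}\!\bra{\beta_x}$. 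Your direct computation is arguably cleaner and shows the bound is tight rather than merely sufficient; the paper's data-processing route has the virtue of not requiring the support analysis and of generalizing to situations where the Choi state is not so easily diagonalized. Both choices of converse quantity ($E_R$ over separable states versus the Rains bound over PPT states) are legitimate here since $J_{\calD_u}$ is both separable and PPT.
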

We note that, analogously to the result of Ref.~\cite{lami2023exact}, all capacities between quantum and strong-converse private are actually equal for this class of channels.

\begin{proof}[Proof outline]
    As observed in Ref.~\cite{pirandola2017fundamental}, the reasoning that led us to \cref{theorem:two-way-capacity} works equally well in the unbounded setting, and shows equality of unbounded two-way quantum capacity and distillable entanglement of the Choi state. We then lower-bound the two-way quantum capacity by the single-copy coherent information evaluated on a maximally entangled input state vector, whose entropy is equal to that of the distribution $p$ because different Paulis map the maximally entangled state to orthogonal states. We then upper-bound the distillable entanglement using the Rains bound, where we use the Choi state of the fully dephasing channel as a candidate state. The matching upper-bound is then obtained realizing that for any distribution $p$, $\calD_p[\rho] = \calT[\rho \otimes p]$ for a map $\calT$ that simply measures the classical register and applies the corresponding Pauli on the input register. This allows us to conclude via data processing.
    The full proof is relegated to the end matter.
\end{proof}
Having established the capacity of our generalized dephasing channel in the unbounded setting, we can exploit our formula for the computational two-way quantum capacity to show a stark separation. 
\begin{theorem}[Computational capacity separation]\label{theorem:computational-capacity-separation}
    Assuming the existence of quantum-secure one-to-one one-way functions, for all $\delta >0$, there exists a quantum channel of polynomial complexity $\calD_n$ such that
    \begin{align}
        \underline{Q}_{\leftrightarrow} (\calD_n) \simeq 0, \text{ but } {Q}_{\leftrightarrow}(\calD_n) = n \log 2 - \Omega(n^\delta).
    \end{align}
\end{theorem}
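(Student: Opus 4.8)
The plan is to instantiate the generalized dephasing channel $\calD_p$ with a distribution $p$ that is computationally indistinguishable from the uniform distribution $u$ on $\{0,1\}^n$, yet whose KL divergence $D(p \fatpipe u)$ is nearly maximal. Concretely, I would take $p$ to be the output distribution of a pseudorandom generator: let $G \colon \{0,1\}^s \to \{0,1\}^n$ be a length-extending function built from a quantum-secure one-to-one one-way function, with seed length $s = n^\delta$ (up to constants), and let $p(x)$ be the distribution of $G(y)$ for $y$ uniform. Since $G$ is injective, $p$ is uniform on a set of size $2^s$, so $D(p \fatpipe u) = n\log 2 - s\log 2 = n \log 2 - \Omega(n^\delta)$. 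By \cref{lemma:capacity-of-generalized-dephasing}, this immediately gives the unbounded statement ${Q}_{\leftrightarrow}(\calD_n) = n\log 2 - \Omega(n^\delta)$. The channel $\calD_n = \calD_p$ has polynomial complexity: sampling a seed, running $G$, and applying the corresponding $Z^x$ are all efficient, and one realizes $\calD_p$ by preparing the classical mixture and applying controlled-$Z$'s.

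The computational side is where the cryptographic assumption does the work. I would argue that $J_{\calD_p} \approx_c J_{\calD_u}$, i.e.\ the Choi state of the pseudorandom dephasing channel is computationally indistinguishable from the Choi state of the fully dephasing channel $\calD_u$. This is because $J_{\calD_p}$ is (a copy of) the classical-quantum state obtained by dephasing half of a maximally entangled state according to $p$, which is just a bitstring $x \sim p$ encoded in a fixed orthonormal basis; distinguishing it (even with polynomially many copies and polynomial-complexity measurements) from the uniform version would distinguish $p$ from $u$, contradicting pseudorandomness of $G$ against quantum adversaries. Then I invoke the monotonicity of computational distillable entanglement under computational indistinguishability — this is exactly the kind of statement established in Ref.~\cite{meyer2025computational} — to conclude $\underline{E}_D(J_{\calD_p}) \simeq \underline{E}_D(J_{\calD_u})$. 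Since $\calD_u$ is the completely dephasing channel, its Choi state is separable (a uniform mixture of product states $\ket{x}\!\bra{x}\otimes\ket{x}\!\bra{x}$), so $\underline{E}_D(J_{\calD_u}) = 0$ and hence $\underline{E}_D(J_{\calD_p}) \simeq 0$. As $\calD_p$ is a Pauli channel, it is efficiently Choi-stretchable, so \cref{theorem:two-way-capacity} yields $\underline{\calQ}_{\leftrightarrow}(\calD_p) \simeq \underline{E}_D(J_{\calD_p}) \simeq 0$, which is the computational statement.

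The main obstacle I anticipate is making the indistinguishability argument robust to the peculiarities of the computational distillable entanglement, which is defined via a nested limit over polynomially many copies $n^k$ and polynomially many gates $n^{k\ell}$: one must check that $J_{\calD_p}^{\otimes n^k} \approx_c J_{\calD_u}^{\otimes n^k}$ holds uniformly in the relevant polynomial regime, and that the distillation protocol of complexity $n^{k\ell}$ cannot be repurposed as a distinguisher that beats the pseudorandomness guarantee. This is essentially a hybrid argument: a successful distillation protocol on $J_{\calD_p}$ would produce ebits that a successful protocol on $J_{\calD_u}$ provably cannot, and this gap — being efficiently checkable by measuring fidelity against $\phi^{\otimes m}$ — is precisely an efficient distinguisher, contradicting $J_{\calD_p} \approx_c J_{\calD_u}$. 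I expect the bookkeeping of which polynomial dominates which (seed length $n^\delta$ versus copy count $n^k$) to require some care, but the standard security reduction for PRGs against $\mathsf{BQP}$ adversaries should go through once the right monotonicity lemma from Ref.~\cite{meyer2025computational} is cited. The remaining details — that a one-to-one one-way function yields a quantum-secure injective PRG with the stated stretch — are standard and I would simply cite the relevant constructions.
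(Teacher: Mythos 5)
Your proposal is correct and follows essentially the same route as the paper: a pseudorandom generator with seed length $n^\delta$ built from the quantum-secure one-to-one one-way function defines the dephasing distribution, \cref{lemma:capacity-of-generalized-dephasing} gives the unbounded capacity, computational indistinguishability of the Choi states $J_{\calD_p} \approx_c J_{\calD_u}$ plus the computational smoothing property of $\underline{E}_D$ gives $\underline{E}_D(J_{\calD_p}) \simeq 0$, and efficient Choi-stretchability of Pauli channels together with \cref{theorem:two-way-capacity} closes the argument. The "hybrid argument" concern you flag is handled in the paper exactly as you anticipate: the fidelity against $\phi^{\otimes m}$ is itself a polynomial-complexity POVM effect, and the definition of $\approx_c$ already quantifies over polynomially many copies, so the distillation protocol cannot serve as a distinguisher.
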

\begin{proof}[Proof outline]
    We use the fact that the Choi states associated to channels defined via computationally indistinguishable distributions inherit this computational indistinguishability. We then show that under the cryptographic assumptions of the theorem, there exists a distribution computationally indistinguishable from the uniform distribution with entropy $n^\delta$, which proves the unbounded two-way quantum capacity statement. The proof is then concluded by showing that the computational distillable entanglement has the computational smoothing property, which allows us to use the computational indistinguishability of the Choi states to conclude that the computational distillable entanglement is at most that of the Choi state of the fully dephasing channel, which is zero.
    The full proof is relegated to the end matter. 
\end{proof}

{\bf Sharp transition of computational capacity.}
We now complement the separation between computational and unbounded capacity with another remarkable phenomenon: a sharp transition in the computational capacity of a generalized dephasing channel from nearly maximal to zero, depending on the size of the support of the involved probability distribution. For any subset $S_n \subseteq \{0,1\}^n$ let us define $u_{S_n}$ to be the uniform distribution over bitstrings in $S_n$ and denote for brevity $\calD_{S_n} = \calD_{u_{S_n}}$. With this notation we can show the following.
\begin{theorem}[Sharp transition in computational capacity]\label{theorem:sharp-computational-capacity-transition}
    Consider the quantum channel $\calD_{{S_n}}$. Then, for any $S_n$ such that $|S_n| \leq O(\poly(n))$, we have
    \begin{align}
    \underline{\calQ}_{\leftrightarrow}(\calD_{{S_n}}) \gtrsim n \log 2 - \omega(\log n)
    \end{align}
    which is near optimal as $\calQ_{\leftrightarrow}(\calD_{{S_n}}) = n \log 2 - O(\log n)$ in this case. However, assuming the existence of quantum-secure one-to-one one-way functions, there exists a subset $S_n'$ with $|S_n'| = \omega(\poly(n))$ such that 
    \begin{align}
        \underline{\calQ}_{\leftrightarrow}(\calD_{{S_n'}}) \simeq 0.
    \end{align}
\end{theorem}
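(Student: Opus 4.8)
The plan is to treat the two regimes separately, in both cases routing through \cref{theorem:two-way-capacity}: since $\calD_{S_n}$ is a Pauli channel it is efficiently Choi-stretchable, so $\underline{\calQ}_{\leftrightarrow}(\calD_{S_n})\simeq\underline{E}_D(J_{\calD_{S_n}})$, and it suffices to control the computational distillable entanglement of the Choi state. The unbounded value is immediate from \cref{lemma:capacity-of-generalized-dephasing}, which gives $\calQ_{\leftrightarrow}(\calD_{S_n})=D(u_{S_n}\fatpipe u)=n\log 2-\log\lvert S_n\rvert$; this equals $n\log 2-O(\log n)$ precisely when $\lvert S_n\rvert\leq O(\poly(n))$, which yields the near-optimality statement once the computational lower bound is in place.

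For the polynomial regime I would exhibit an explicit efficient distillation protocol. The structural input is that $J_{\calD_{S_n}}$ is Bell-diagonal, being the uniform mixture over $x\in S_n$ of the $Z$-type Bell states obtained by applying $Z^x$ to one half of $\phi^{\otimes n}$. On each of the $m=n^k$ copies appearing in the regularization the parties run the Bell-diagonal hashing protocol: they extract $s$ collective parity bits of the error label $x$ by the standard bilateral-$\mathrm{CNOT}$ procedure, at a cost of one ebit per bit, and then decode. Because the error support factorizes as $S_n^{\times m}$ with each factor of polynomial size, decoding is efficient: one fixes, once and for all, a linear parity map of length $s=\log_2\lvert S_n\rvert+\omega(\log n)$ that is injective on $S_n$ — such a map exists by a union bound as soon as $2^s>\binom{\lvert S_n\rvert}{2}$ — and then on each copy recovers $x$ by a brute-force search over the $\poly(n)$ elements of $S_n$, followed by an efficient Pauli correction. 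Each copy thus returns $n\log 2-s\log 2=n\log 2-\omega(\log n)$ clean ebits with no error; the whole circuit has complexity $\poly(n)$ for fixed $k$ and is therefore admissible, and dividing by $m$ gives $\underline{E}_D(J_{\calD_{S_n}})\gtrsim n\log 2-\omega(\log n)$, hence $\underline{\calQ}_{\leftrightarrow}(\calD_{S_n})\gtrsim n\log 2-\omega(\log n)$ by \cref{theorem:two-way-capacity}.

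For the super-polynomial regime the plan is to invoke \cref{theorem:computational-capacity-separation}. Under quantum-secure one-to-one one-way functions one obtains an injective pseudorandom generator with seed length $n^\delta$, and hence a distribution $u_{S_n'}$ uniform over its image $S_n'$, with $\lvert S_n'\rvert=2^{\Omega(n^\delta)}=\omega(\poly(n))$, that is computationally indistinguishable from the uniform distribution $u$ on $\{0,1\}^n$; this is precisely the distribution underlying the channel of \cref{theorem:computational-capacity-separation}, so that theorem already delivers $\underline{\calQ}_{\leftrightarrow}(\calD_{S_n'})\simeq 0$. Unpacked: $u_{S_n'}\approx_c u$ lifts to $J_{\calD_{S_n'}}\approx_c J_{\calD_u}$, and since $J_{\calD_u}=2^{-n}\sum_z\ket{zz}\bra{zz}$ is separable with vanishing computational distillable entanglement, the computational smoothing property of $\underline{E}_D$ forces $\underline{E}_D(J_{\calD_{S_n'}})\lesssim 0$, whence $\underline{\calQ}_{\leftrightarrow}(\calD_{S_n'})\simeq\underline{E}_D(J_{\calD_{S_n'}})\simeq 0$ by \cref{theorem:two-way-capacity}.

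The step I expect to be the main obstacle is making the distillation protocol of the polynomial regime genuinely \emph{computationally} efficient: hashing is cheap to implement with Clifford gates and classical communication, but decoding its syndromes is intractable in general, so the argument has to exploit the product structure $S_n^{\times m}$ of the error support to reduce decoding to independent brute-force searches over the small set $S_n$; a secondary point needing care is the regularization bookkeeping, where the demand of a negligible failure probability, uniformly over all $\poly(n)$ copies, is what leaves the logarithmic slack $\omega(\log n)$ in place of the information-theoretic optimum $\log\lvert S_n\rvert$. Given \cref{theorem:computational-capacity-separation}, the super-polynomial half is by comparison routine.
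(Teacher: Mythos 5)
Your proposal is correct and follows essentially the same route as the paper: reduce to $\underline{E}_D(J_{\calD_{S_n}})$ via \cref{theorem:two-way-capacity}, extract $\omega(\log n)$ syndrome bits that identify the error label within the polynomial-size set $S_n$ by a union bound, decode by brute-force search over $S_n$, and dispatch the super-polynomial regime by invoking \cref{theorem:computational-capacity-separation}. The only (inessential) difference is in the syndrome extraction: you use a fixed linear parity map, shown to exist by the probabilistic method and implemented with bilateral CNOTs, whereas the paper conjugates by a shared random Clifford and measures the first $m$ qubits, absorbing the small collision probability into the $\epsilon$ of the distillation definition — both rest on the same $|S_n|^2 2^{-m}$ collision bound and yield the same $n\log 2-\omega(\log n)$ rate.
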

\begin{proof}[Proof outline]
We give a protocol 
that achieves the computational capacity of the theorem statement that is efficient as long as $|S_n|\leq O(\poly(n))$. The protocol starts with an application of Hadamard gates so that we work with bitflips $X^x$ instead of phase flips $Z^x$. We then use the fact that the Choi state of this generalized bitflip channel is a mixture of maximally entangled states, where every copy carries an unknown correction $X^x$. Alice and Bob apply a random Clifford such that $X^x$ is mapped to $X^{\pi(x)}$, where $\pi(x)$ is a uniformly random bitstring different from the identity. They then measure the first $m = \omega(\log n)$ bits, and we show that this information is enough to reliably find $\pi(x)$ inside of $\pi(S_n)$, in a way that is efficient as long as $|S_n| \leq O(\poly(n))$. Alice and Bob then undo $X^{\pi(x)}$ and are left with $n - \omega(\log n)$ many ebits. 
The transition for $|S_n| = \omega(\poly(n))$ is a consequence of \cref{theorem:computational-capacity-separation}, observing that the distribution used in this proof is of the required form. 
The full proof is relegated to the end matter.
\end{proof}

{\bf Outlook.} 
In this work, we have challenged established notions of quantum communication by accepting that all steps being taken must be efficient to end up at a meaningful notion of a quantum capacity. It is the hope that the present work helps advocating a paradigm of efficient quantum information theory that accepts that at the end of the day, only efficient protocols can be implemented -- bringing notions of quantum computing and communication a lot closer together.\smallskip

{\bf Acknowledgments.}
This project has been funded by the 
BMFTR (QR.N, DAQC, QSol, MuniQC-Atoms, Hybrid++), 
the BMWK (EniQmA),
the Munich Quantum Valley (K-8),
the Clusters of Excellence (ML4Q and MATH+),
Berlin Quantum, the QuantERA (HQCC),
the Quantum Flagship (Millenion and Pasquans2),
and the European Research Council 
(DebuQC).

\bibliography{main}

\clearpage
\appendix

\section{End Matter}

\begin{proof}[Proof of \cref{lemma:efficient-stretchability-from-teleportation-covariance}]
    The fact that $\Phi_n$ is Choi-stretchable is already shown in Ref.~\cite{pirandola2017fundamental}. We are left to argue that $V_i$ necessarily has polynomial complexity.
    As $\Phi_n$ has polynomial complexity in our computational model, there exists a unitary dilation $W_n$ of polynomial complexity such that
    \begin{align}
        \Phi_n[X] = \Tr_{\mathrm{anc}}[W_i(|0\rangle\!\langle 0|_{\mathrm{anc}} \otimes X)W_i^{\dagger}].
    \end{align}
    Using the dilation we get the equality
    \begin{align}
        &V_i \Phi_n[X]V_i^{\dagger} \\
        \nonumber
        &= V_i \Tr_{\mathrm{anc}} [W_i(|0\rangle\!\langle 0|_{\mathrm{anc}} \otimes X)W_i^{\dagger}] V_i^\dagger\\
        \nonumber&=\Tr_{\mathrm{anc}} [(Q_i \otimes V_i)W_i(|0\rangle\!\langle 0|_{\mathrm{anc}} \otimes  X )W_i^{\dagger}(Q_i \otimes V_i)^\dagger], 
        \nonumber
    \end{align}
    where we used the fact that $V_i$ only acts on the systems that are not traced out and $Q_i$ is any unitary operator. Similarly, from the other direction
    \begin{align}
        &\Phi_n[U_i X U_i^{\dagger}] \\
        \nonumber
        &=\Tr_{\mathrm{anc}}[W_i(|0\rangle\!\langle 0|_{\mathrm{anc}} \otimes U_i X U_i^{\dagger})W_i^{\dagger}] \\
         \nonumber
        &= \Tr_{\mathrm{anc}}[\tilde{U}_i W_i(|0\rangle\!\langle 0|_{\mathrm{anc}} \otimes X)W_i^{\dagger}\tilde{U}_i^\dagger],
         \nonumber
    \end{align}
    where $\tilde{U}_i^{\dagger} = W_i (\bbI \otimes U_i) W_i^\dagger$.
    As both expressions need to be equal, there must exist some $Q_i$ such that
    \begin{align}
        W_i (\bbI \otimes U_i) W_i^\dagger = Q_i \otimes V_i.
    \end{align}
    As $V_i$ can hence be implemented as
    \begin{align}
        V_i = \Tr_{\mathrm{anc}}[W_i (\bbI \otimes U_i) W_i^\dagger],
    \end{align}
    we have the upper bound $C(V_i) \leq C(U_i) + 2C(W_i) = C(U_i) + 2 C(\Phi_n)$. By assumption, $C(\Phi_n) \leq O(\poly(n))$. Teleportation-correction unitaries are always one-local operations which can be implemented efficiently if $\log \dim \calH_n \leq O(\poly(n))$, which holds by assumption. Hence, $C(V_i) \leq O(\poly(n))$ as desired.
\end{proof}

\begin{proof}[Proof of \cref{lemma:capacity-of-generalized-dephasing}]
    As observed in Ref.~\cite{pirandola2017fundamental}, the reasoning that led us to \cref{theorem:two-way-capacity} works equally well in the unbounded setting, which proves that
    \begin{align}
        \calQ_{\leftrightarrow}(\calD_p) = E_D(J_{\calD_p}).
    \end{align}
    To establish the exact value of the two-way capacity, we follow Ref.~\cite{lami2023exact} and lower-bound the two-way quantum capacity by the coherent information evaluated on a maximally entangled input state vector
    \begin{align}
        \gamma^{A,B} = \frac{1}{2^n} \sum_{x,x' \in \{0,1\}^n} |x,x\rangle\!\langle x',x'|.
    \end{align}
    This yields
    \begin{eqnarray}
        \calQ_{\leftrightarrow}(\calD_p) &\geq &\calQ(\calD_p) 
        \geq S(\calD_p[\gamma^{A}]) - S((\calD_p \otimes \bbI)[\gamma^{A B}])\nonumber \\
        &= &n \log 2 - S(J_{\calD_p}).
    \end{eqnarray}
    The entropy of the Choi state simply evaluates to
    \begin{align}
        S(J_{\calD_p}) &= (\calD_p \otimes \bbI)[\gamma^{A_n, B_n}] \\
         \nonumber
        &= S\Bigg(\sum_{x \in \{0,1\}^n} p(x) Z^x \gamma^{A_n, B_n} Z^x\Bigg) 
        = S(p),
    \end{align}
    because all states $Z^x \gamma Z^x$ are orthogonal to each other. Recognizing that $n\log 2 - S(p) = D(p\fatpipe u)$ settles the lower bound. To claim the upper bound, we use Rains bound on the distillable entanglement, which posits that $E_D(\rho^{A,B}) \leq \min_{\sigma^{A,B} \text{ PPT}}D(\rho^{A,B} \fatpipe \sigma^{A,B})$, where the relative entropy is minimized over all \emph{positive partial transpose} (PPT) states. A particular instance of a PPT state is the maximally classically correlated state, the Choi state of the fully dephasing channel, 
    \begin{align}
        \Bar{\gamma} = \frac{1}{2^n} \sum_{x \in \{0,1\}^n} |x,x\rangle\!\langle x,x| = J_{\calD_u}.
    \end{align}
    We now use the fact that we can implement the generalized dephasing channel as $\calD_p[\rho] = \calT[\rho \otimes p]$, where $\calT$ simply measures the classical register in the computational basis and upon obtaining the outcome $x$ implements $Z^x$ on the input register. Then, we have
    \begin{align}
        E_D(J_{\calD_p}) &\leq D(J_{\calD_p} \fatpipe J_{\calD_u}) \\
         \nonumber
        &= D((\bbI \otimes \calD_p)[\gamma] \fatpipe (\bbI \otimes \calD_u)[\gamma]) \\
         \nonumber
        &= D(\bbI \otimes \calT)[\gamma \otimes p] \fatpipe (\bbI \otimes \calT)[\gamma \otimes u]) \\
         \nonumber
        &\leq D(\gamma \otimes \fatpipe \gamma \otimes u)  
        = D(p\fatpipe u).
         \nonumber
    \end{align}
    Here we have just combined the above simulation argument with data-processing of the relative entropy to conclude the proof of the lemma. The comment below follows from us having lower-bounded the quantum capacity directly and the fact that the upper bound in terms of the distillable entanglement holds also for the strong converse private capacity 
    \cite{wilde2017converse}.
\end{proof}
\begin{proof}[Proof of \cref{theorem:computational-capacity-separation}]
    Let us first consider the generalized dephasing channel associated to two classical distributions $p_n$ and $q_n$ that are computationally indistinguishable. In this case, the Choi states of the associated channels inherit the computational indistinguishability
    \begin{align}
        p_n \approx_c q_n \ \Rightarrow \ J_{\calD_{p_n}} \approx_c J_{\calD_{q_n}}.
    \end{align}
    This follows because we can implement the channel $\calD_{p_n}$ with access to the distribution $p_n$ by first measuring it in the computational basis, obtaining outcome $x$, and subsequently implementing $Z^x$ on the input quantum state. Upon input of the maximally entangled state, which is efficiently preparable, we obtain the Choi state. Hence, any protocol for distinguishing the Choi states could be used to distinguish the distributions, yielding a contradiction.
    Next, we need the fact that under the assumptions on the existence of one-way functions imposed in the Theorem statement, there exist a classical distribution $p_n$ over length $n$ bitstrings that is computationally indistinguishable from the uniform distribution $u_n$ but at the same time fulfills 
    \begin{align}
        D(p_n \fatpipe u_n) = n \log 2 - \Omega(n^\delta).
    \end{align}
    We use here a very standard construction based on the Goldreich-Levin Theorem (see, e.g., Ref~\cite{srinivasan2023fundamentals}). This theorem notably allows us to construct a pseudorandom generator $G: \{0,1\}^m \rightarrow \{0,1\}^{m+1}$ such that the distribution $\{G(x)\}_{x \leftarrow u_m}$ is computationally indistinguishable from the uniform distribution $u_{m+1}$, i.e., it stretches the $m$-bit uniform distribution by $1$ bit. We can then apply this construction recursively, to extend this stretch to arbitrary $\poly(m)$ bits, under the same hardness assumption of inverting the one-way function (see Corollary 1 in Ref~\cite{srinivasan2023fundamentals}). Since we want a pseudorandom distribution $p_n$ on $n$ bits, we can therefore choose $m = n^\delta$ for any $\delta>0$ and stretch $u_m$ polynomially to $n$ bits. This construction only uses $n^\delta$ bits of randomness, which gives the desired upper bound on the entropy of $p_n$.
    The final ingredient we need is to show that the efficiently distillable entanglement $\underline{E}_D$ has the \emph{computational smoothing} property~\cite{meyer2025computational}. Indeed, consider any state $\Tilde{\rho}_n^{A_n, B_n}$ computationally indistinguishable from $\rho_n^{A_n, B_n}$. Then, the objective in the definition of the single-shot distillable entanglement of \cref{eqn:def-single-shot-efficiently-distillable-entanglement} reads
    \begin{align}
        F(\Phi[\rho^{A B}, \phi^{\otimes m}]
        &= \Tr[ \Phi[\rho^{A B}]\phi^{\otimes m} ] 
        =\Tr[ \rho^{A B} \Phi^{\dagger}[\phi^{\otimes m} ]]. 
    \end{align}
    When performing the regularization to obtain the computationally distillable entanglement of \cref{eqn:def-computational-distllable-entanglement}, we observe that the complexity of the POVM effect $\Phi^{\dagger}[\phi^{\otimes m}]$ is always polynomial as the maximally entangled state $\phi$ is efficiently preparable and $\Phi$ is an efficient LOCC map. Therefore, computational indistinguishability posits that any protocol that guarantees a fidelity of at least $1-\epsilon$ upon input of $\rho_n^{A_n, B_n}$ also has to guarantee a fidelity of $1-\epsilon - \negl(n)$ for $\tilde\rho_n^{A_n, B_n}$. For $n$ sufficiently large, we always have that the negligible term is $\leq \epsilon$ and upon taking the final limit $\epsilon \to 0$ we recover
    \begin{align}
        \underline{E}_D(\rho_n^{A_n, B_n}) = \underline{E}_D(\tilde\rho_n^{A_n, B_n}).
    \end{align}
    As the choice of $\tilde\rho_n^{A_n, B_n}$ among all computationally indistinguishable states was arbitrary, we obtain the desired computational smoothing property
    \begin{align}
        \underline{E}_D(\rho_n^{A_n, B_n}) = \min_{\tilde\rho_n^{A_n, B_n} \approx_c \rho_n^{A_n, B_n}} \underline{E}_D(\tilde\rho_n^{A_n, B_n}).
    \end{align}
    Therefore, we can reason that
    \begin{align}
        \underline{E}_D(J_{\calD_{p_n}}) \simeq \underline{E}_D(J_{\calD_{u_n}}) \simeq 0,
    \end{align}
    the latter equality following because the Choi state of the full dephasing channel associated to the uniform distribution $u_n$ is not entangled.
    Given theses results, the theorem statement then follows from the capacity formula in the unbounded setting of \cref{lemma:capacity-of-generalized-dephasing} and the link between computational capacity and distillable entanglement of \cref{theorem:two-way-capacity}.
\end{proof}

\begin{proof}[Proof of \cref{theorem:sharp-computational-capacity-transition}]
    We shall prove an achievability statement for the distillable entanglement of the Choi state $J_{D_{{S_n}}}$ in the case $|S_n| = O(\poly(n))$. First of all, Alice and Bob can apply a layer of Hadamard gates to all their qubits, transforming the operators $Z^x$ in the definition of $\calD_{S_n}$ to bitflips $X^x$. We can then think of the Choi state as a uniform mixture of states of the form $(X^x \otimes \bbI)\ket{\gamma}$ for $x \in S_n$, where $\ket{\gamma}$ is the maximally entangled state vector. Our strategy is now to find out $x$ without disturbing too many qubits.
    To do so, similarly to Ref.~\cite{gu2025constantoverheadentanglementdistillation}, Alice and Bob agree on a uniformly random Clifford operation $C_\pi$, where Alice implements $C_\pi$ on her half of the Choi state and Bob implements $C_\pi^T$ on his half, realizing a permutation of the bitstrings $x \mapsto \pi(x)$ such that $\pi(x)$ is uniform over all bitstrings distinct from the all-zero one. As Cliffords can be implemented in polynomial time, the communication cost is also polynomial. The uniformity of the image $\pi(x)$ is a consequence of group structure of Cliffords, which means that the probability that $C_\pi X^x C_\pi^{\dagger}=X^{x'}$ (for $x,x'\neq 0$) is left- and right invariant under multiplication with the Clifford that would map $x$ to $x'$, hence proving the uniformity of the induced permutation of bitstrings.
    Now, Alice and Bob both measure the first $m$ qubits of their half of the thus transformed Choi state, obtaining measurement outcomes $y_A$ and $y_B$ such that $y = y_A + y_B$ equals the first $m$ bits of $y = \pi(x)|_m$. 
    We claim that the knowledge of $y$ is sufficient to identify $\pi(x)$ from the set $\pi(S_n)$ with high probability. To see this, consider that for $x \neq x'$ we have that
    \begin{align}        \operatornamewithlimits{\bbP}_{C_\pi}\big[ \pi(x)|_m = \pi(x')|_m \big] = \frac{2^{n-m}-1}{2^n-1} \leq 2^{-m}.
    \end{align}
    This bound arises by simply counting the number of bitstrings (excluding the all-zero one) that have a fixed pattern on the first $m$ bits.
    The union bound then posits that the probability of a collision which would make it impossible to reliably identify the bitstring inside $\pi(S_n)$ is at most $|S_n|^2 2^{-m}$. We note that the failure case can be absorbed into the $\epsilon$ parameter in the definition of the distillable entanglement, which means we only need to guarantee a failure probability of $O(\epsilon/n^k)$, where $n^k$ is the number of available copies. This is surely achieved by choosing
    \begin{align}
        m = \log \frac{|S_n|^2}{\negl(n)} = \omega(\log n)
    \end{align}
    many bits to measure. 
    From the observation $y$, Alice then performs a search inside $\pi(S_n)$ which can be efficiently done exactly if $|S_n|=O(\poly(n))$.
    Having reliably identified $\pi(x)$ from $\pi(S_n)$, Alice applies the operator $X^{\pi(x)}$ and Alice and Bob now share $n-m = n - \omega(\log n)$ ebits, proving the achievability statement. The corresponding achievability in the unbounded setting is a direct consequence of \cref{lemma:capacity-of-generalized-dephasing}. The fact that there exist a channel with $|S_n| = \omega(\poly(n))$ of vanishing computational capacity is part of \cref{theorem:computational-capacity-separation}, observing that the pseudorandom distribution used in the construction is of the form $u_{S}$ for some set $S$.
\end{proof}

\end{document}